\newtheorem{theorem}{Theorem}[section]
\newtheorem{claim}[theorem]{Claim}
\theoremstyle{definition}
\newenvironment{fminipage}%
  {\begin{Sbox}\begin{minipage}}%
  {\end{minipage}\end{Sbox}\fbox{\TheSbox}}
\def\defeq{\stackrel{\mathrm{def}}{=}}
\def\abs#1{\left|#1  \right|}
\def\norm#1{\left\| #1 \right\|}
\def\calE{\mathcal{E}}
\def\calS{\mathcal{S}}
\def\eps{\epsilon}
\newcommand\PPi{\boldsymbol{\Pi}}
\newcommand\uu{\boldsymbol{\mathit{u}}}
\newcommand\vv{\boldsymbol{\mathit{v}}}
\newcommand\ww{\boldsymbol{\mathit{w}}}
\newcommand\yy{\boldsymbol{\mathit{y}}}
\newcommand\zz{\boldsymbol{\mathit{z}}}
\newcommand\xx{\boldsymbol{\mathit{x}}}
\renewcommand\AA{\boldsymbol{\mathit{A}}}
\newcommand\II{\boldsymbol{\mathit{I}}}
\newcommand\JJ{\boldsymbol{\mathit{J}}}
\newcommand\MM{\boldsymbol{\mathit{M}}}
\newcommand{\one}{\mathbf{1}}
\def\tr{\text{tr}}
\def\cov{\text{Cov}}
\def\ssplit{\text{Set-Splitting}}
\def\dims{d}
\def\n{n}
\def\N{N}
\title{Hardness Results for Minimizing the Covariance of Randomly Signed Sum of Vectors}
\author{Peng Zhang \\ pz149@cs.rutgers.edu \\ Department of Computer Science \\ Rutgers University}
\begin{document}

\maketitle

\begin{abstract}
  Given vectors $\vv_1, \ldots, \vv_n \in \mathbb{R}^d$ with Euclidean norm at most $1$ and $\xx_0 \in [-1,1]^n$, 
  our goal is to sample a random signing $\xx \in \{\pm 1\}^n$ with
  $\mathbb{E}[\xx] = \xx_0$ such that the operator norm of the covariance of the signed sum of the vectors $\sum_{i=1}^n \xx(i) \vv_i$
  is as small as possible. This problem arises from the algorithmic discrepancy theory and 
  its application in the design of randomized experiments. 
  It is known that one can sample a random signing with expectation $\xx_0$
  and the covariance operator norm at most $1$.
  
  In this paper, we prove two hardness results for this problem.
  First, we show it is NP-hard to distinguish a list of vectors 
  for which there exists a random signing with expectation ${\bf 0}$
  such that the operator norm is $0$ from those for which any signing with expectation ${\bf 0}$
  must have the operator norm $\Omega(1)$.
  Second, we consider $\xx_0 \in [-1,1]^n$ whose entries are all around an arbitrarily fixed $p \in [-1,1]$.
  We show it is NP-hard to distinguish a list of vectors for which 
  there exists a random signing with expectation $\xx_0$
  such that the operator norm is $0$ from those for which any signing with expectation ${\bf 0}$
  must have the operator norm $\Omega((1-\abs{p})^2)$.
\end{abstract}

\section{Introduction}

Given a list of $\n$ vectors $\mathcal{V} = \vv_1, \ldots, \vv_{\n} \in \mathbb{R}^{\dims}$
and a vector $\xx_0 \in [-1,1]^{\n}$, our goal is to sample a random signing vector $\xx \in \{\pm 1\}^{\n}$
with $\mathbb{E}[\xx] = \xx_0$ such that 
the covariance of the signed sum of the vectors, 
$$
\cov(\mathcal{V}, \xx) \defeq
\cov\left( \sum_{i=1}^{\n} \xx(i) \vv_i \right)
= \mathbb{E} \left[ \left(
  \sum_{i=1}^{\n} (\xx(i) - \xx_0(i)) \vv_i
\right) \left(
  \sum_{i=1}^{\n} (\xx(i) - \xx_0(i)) \vv_i
\right)^\top \right],
$$
has the minimum operator norm. 
Here, $\xx(i)$ is the $i$th entry of $\xx$.
Since the covariance scales quadratically with the maximum Euclidean norm 
of vectors $\vv_1, \ldots, \vv_n$,
without loss of generality, we assume all $\vv_1, \ldots, \vv_{\n}$ have Euclidean norms at most $1$.

This problem arises from the algorithmic discrepancy theory and 
its application in the design of randomized experiments. 
A stronger version of the problem was first studied by Dadush, Garg, Lovett, and Nikolov \cite{DGLN19},
aiming to provide an algorithmic proof of Banaszczyk's discrepancy problem \cite{banaszczyk98}.
Here, the goal is to sample $\xx \in \{\pm 1\}^n$ with $\mathbb{E}[\xx] = \xx_0$ such that 
$\sum_{i=1}^n (\xx(i) - \xx_0(i)) \vv_i$ is $\sigma$-subgaussian.
The $\sigma$-subgaussianity immediately implies the operator norm of $\cov(\mathcal{V}, \xx)$,
denoted by $\norm{\cov(\mathcal{V}, \xx)}$, is at most $\sigma^2$.
Bansal, Dadush, Garg, and Lovett \cite{BDGL18} designed a polynomial time algorithm, 
called the Gram-Schmidt Walk, that outputs a random $\xx \in \{\pm 1\}^{\n}$ 
achieving $\sigma \le \sqrt{40}$. 
This upper bound was then improved to $\sigma \le 1$, by Harshaw, S{\"a}vje, Spielman, and Zhang \cite{HSSZ19},
which is tight and the equality holds when $\n = \dims$ and 
$\vv_1, \ldots, \vv_{\n}$ are the $\n$ standard basis vectors in $\mathbb{R}^{\n}$.
Building on the Gram-Schmidt Walk algorithm,
Harshaw, S{\"a}vje, Spielman, and Zhang \cite{HSSZ19} proposed the Gram-Schmidt Walk Design
to balance covariates in randomized experiments widely used in causal inference.

All the above upper bounds for $\norm{\cov(\mathcal{V}, \xx)}$, achieved by the Gram-Schmidt Walk, 
are independent of the optimal value, denoted by 
\[
  C(\mathcal{V}, \xx_0) \defeq \min_{\xx \in \{\pm 1\}^n: \mathbb{E}[\xx] = \xx_0} 
\norm{\cov(\mathcal{V}, \xx)}.
\] 
It is natural to ask whether we can efficiently sample a random $\xx \in \{\pm 1\}^n$ with $\mathbb{E}[\xx] = \xx_0$
such that $\norm{\cov(\mathcal{V}, \xx)}$ (approximately) equals the optimal value?
In this paper, we prove strong hardness results for this question.

\begin{theorem}
  There exists a constant $C_1 > 0$ such that given a list of vectors $\mathcal{V}$ 
  of Euclidean norm $1$, it is NP-hard to distinguish whether $C(\mathcal{V}, {\bf 0}) = 0$
  or $C(\mathcal{V}, {\bf 0}) > C_1$.
  \label{thm:main1}
\end{theorem}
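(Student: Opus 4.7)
The plan is to reduce from a gap version of MAX-NAE-E3SAT on bounded-degree regular instances: distinguish whether every clause can be NAE-satisfied simultaneously from instances where every assignment leaves at least an $\eps$-fraction of clauses not NAE-satisfied, for some absolute constants $\eps > 0$ and variable-degree $k$. This gap problem is NP-hard via the PCP theorem composed with a standard expander-based degree-reduction gadget.

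Given such an instance with $n$ variables and $m$ clauses $C_1, \dots, C_m$, each variable occurring in exactly $k$ clauses, I would build a list of vectors in $\mathbb{R}^m$ as follows. For each variable $x_i$, let $\vv_i$ have $1/\sqrt{k}$ in the coordinate of each clause where $x_i$ appears positively, $-1/\sqrt{k}$ where it appears negatively, and $0$ otherwise, so $\|\vv_i\| = 1$. For each clause $c$, introduce an auxiliary sign variable $y_c$ with vector $\vv_{y_c} = \ee_c/\sqrt{k}$ (norm $1/\sqrt{k} \le 1$). Writing the combined signing as $\xx = (x, y) \in \{\pm 1\}^{n+m}$, the $c$-th coordinate of $V\xx$ is $\tfrac{1}{\sqrt{k}}(\ell_{c,1} + \ell_{c,2} + \ell_{c,3} + y_c)$, where $\ell_{c,j}$ is the value under $x$ of the $j$-th literal of $C_c$.

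For completeness, in a YES instance pick $x^*$ NAE-satisfying every clause, so $\ell_{c,1} + \ell_{c,2} + \ell_{c,3} \in \{\pm 1\}$ for every $c$. Setting $y_c^* := -(\ell_{c,1} + \ell_{c,2} + \ell_{c,3}) \in \{\pm 1\}$ makes $V\xx^* = {\bf 0}$, and the uniform distribution on $\{\xx^*, -\xx^*\}$ has $\mathbb{E}[\xx] = {\bf 0}$ and $\cov = 0$, so $C(\mathcal{V}, {\bf 0}) = 0$. For soundness, in a NO instance any $\xx$ leaves at least $\eps m$ clauses with $\ell_{c,1} + \ell_{c,2} + \ell_{c,3} = \pm 3$; for each such $c$ and any $y_c \in \{\pm 1\}$, $|\ell_{c,1} + \ell_{c,2} + \ell_{c,3} + y_c| \ge 2$, so the $c$-th squared coordinate of $V\xx$ is at least $4/k$, giving $\|V\xx\|^2 \ge 4\eps m/k$ pointwise. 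Any feasible distribution then satisfies $\tr(\cov) = \mathbb{E}[\|V\xx\|^2] \ge 4\eps m/k$, and since $\cov$ is $m \times m$ PSD, $\|\cov\|_{\mathrm{op}} \ge \tr(\cov)/m \ge 4\eps/k =: C_1$.

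The step I expect to be the main obstacle is coupling the soundness lower bound with the trace inequality $\|\cov\|_{\mathrm{op}} \ge \tr(\cov)/d$: the latter is informative only when $d$ is small relative to $\tr(\cov)$, and the former yields $\tr(\cov) = \Omega(m)$ only when the normalization factor $1/\sqrt{k}$ stays bounded below, i.e.\ when $k$ is a constant. Both requirements are met precisely by the bounded-degree regular version of NAE-E3SAT, so the reduction hinges on invoking the correct form of that hardness. A secondary subtlety is matching the theorem's ``norm $1$'' convention: the auxiliary vectors have norm $1/\sqrt{k}$, consistent with the introduction's ``at most $1$'' convention, but if strict unit norm is required each $\vv_{y_c}$ should be replaced by a short gadget of unit vectors whose extra-dimensional components cancel in the completeness signing, preserving both directions of the reduction up to constants.
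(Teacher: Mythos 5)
Your plan has the same skeleton as the paper's: reduce from a bounded-occurrence gap CSP, build one coordinate per constraint, use a completeness signing together with its negation to get zero covariance, and use the PSD trace inequality $\norm{\cov} \ge \tr(\cov)/d$ for soundness. The paper reduces from $(3,2$-$2)$ Set-Splitting (Theorems~\ref{thm:ssplit_hard} and~\ref{thm:322sspit_hard}) rather than NAE-E3SAT, which is a meaningful structural convenience: a split $4$-element set sums to $0$ on its own, so no per-constraint auxiliary variable is needed, whereas your odd-arity NAE clause sums to $\pm 1$ and forces you to introduce the absorbing variable $y_c$.

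Two points need more than what you give. First, Theorem~\ref{thm:main1} asks for Euclidean norm exactly $1$, and your $\vv_{y_c}=\ee_c/\sqrt k$ has norm $1/\sqrt k$. You flag this and sketch a gadget, but that gadget is not incidental --- it is essentially the whole constructive effort of the paper's Section~\ref{sec:proofs1} (the padded $\vv_i$ and the $\uu_{i,h}$ vectors, plus Claim~\ref{clm:sign_u}). One must ensure that for \emph{every} signing, not merely the completeness one, the extra coordinates can be driven to zero (otherwise the extra vectors could inflate the covariance and mask what happens on the clause coordinates), and one must also check that the ambient dimension stays $O(m)$ so that the $\tr/d$ bound does not degrade. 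Second, the hardness you invoke --- constant gap with perfect completeness for \emph{bounded-degree} NAE-E3SAT --- is plausible (Petrank-style gap at location $1$ plus degree reduction), but it is not an off-the-shelf statement: the usual $3$SAT$\to$NAE reduction introduces a universal dummy variable of degree $\Theta(m)$, and regularizing it via an expander of NAE clauses while preserving perfect completeness is exactly the kind of step that needs a worked-out argument or a specific citation. The paper avoids both issues by using the explicit bounded-occurrence Set-Splitting hardness and by building a concrete norm-$1$ gadget; those are the two places where your proposal, as written, still has work to do.
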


Theorem \ref{thm:main1} concerns $\xx_0 = {\bf 0}$; that is, for every $i \in \{1,\ldots,n\}$, 
the marginal probability of $\xx(i)$ being $1$ or $-1$ equals $1/2$.
One may expect that the hardness comes from the balanced marginal probability for each $\xx(i)$.
If the marginal probability of each $\xx(i)$ changes towards $0$ or $1$, 
 the problem may become more tractable. 
In particular, when $\xx_0 \in \{\pm 1\}^n$, it is clear that $C(\mathcal{V}, \xx_0) = 0$ 
due to no randomness. 
Our Theorem \ref{thm:main2} concerns $\xx_0 \neq {\bf 0}$.
It shows a gap, parameterized only by entries of $\xx_0$, between the covariance operator norms 
of two cases which are NP-hard to distinguish.

\begin{theorem}
  There exists a constant $C_2 > 0$ such that the following holds:
  For any $p,q \in [-1,1]$, there exists $\xx_0 \in \{p, p+(1-\abs{p})q, p-(1-\abs{p})q \}^{n}$
  such that given a list of $n$ vectors $\mathcal{V}$ of Euclidean norm at most $1$, it is NP-hard to distinguish whether 
  $C(\mathcal{V}, \xx_0) = 0$ or $C(\mathcal{V}, \xx_0) > C_2 (1-\abs{p})^2 q^2$.
  \label{thm:main2}
\end{theorem}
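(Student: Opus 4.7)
The plan is to reduce from Theorem~\ref{thm:main1}. Assume without loss of generality that $p \geq 0$ and $q \in (0, 1]$; the remaining cases are symmetric or trivial. Set $r = (1-p)q$. Given an instance $\mathcal{V} = (\vv_1, \ldots, \vv_n)$ of unit vectors from Theorem~\ref{thm:main1}, the goal is to construct a biased-target instance $(\mathcal{V}', \xx_0')$ with $\xx_0' \in \{p, p+r, p-r\}^{n'}$ such that $C(\mathcal{V}, \mathbf{0}) = 0$ implies $C(\mathcal{V}', \xx_0') = 0$, while $C(\mathcal{V}, \mathbf{0}) > C_1$ implies $C(\mathcal{V}', \xx_0') > C_2 r^2$.

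A natural first attempt is the ``pair gadget'': for each $\vv_i$, introduce into $\mathcal{V}'$ two vectors $\ww_{i,+} = \vv_i$ with target marginal $p + r$ and $\ww_{i,-} = -\vv_i$ with target marginal $p - r$. The signed sum becomes $\sum_i D_i \vv_i$ with $D_i := \xx_{i,+} - \xx_{i,-} \in \{-2, 0, 2\}$ and $\mathbb{E}[D_i] = 2r$. After a preliminary padding step enforcing $\sum_i \vv_i = \mathbf{0}$, I would realize the YES direction by mixing three global states: with probability $1 - p - r$ (``active''), draw $y \sim \mu$ from an original mean-$\mathbf{0}$ certificate and set $(\xx_{i,+}, \xx_{i,-}) = (y_i, -y_i)$; with probability $r$, set every $(\xx_{i,+}, \xx_{i,-}) = (1, -1)$; with probability $p$, set every $(\xx_{i,+}, \xx_{i,-}) = (1, 1)$. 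Direct computation confirms the marginals equal $(p+r, p-r)$ and $\sum_i D_i \vv_i = \mathbf{0}$ in each state, giving $\|\cov\| = 0$.

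The NO direction is the central technical challenge, and the pair gadget alone does not suffice: the adversarial coupling ``with probability $r$ synchronize every $(\xx_{i,+}, \xx_{i,-}) = (1, -1)$, otherwise mix $(1, 1)$ and $(-1, -1)$ to match marginals'' respects the biased marginals while producing a signed sum that is always a multiple of $\sum_i \vv_i = \mathbf{0}$, hence $\|\cov\| = 0$ independently of the original instance. Rescuing the reduction will require enriching the gadget to block such synchronized couplings---for example, by assigning each $\vv_i$ a sign $\sigma_i \in \{\pm 1\}$ and using biases $(p + \sigma_i r, p - \sigma_i r)$ so that no single synchronized pattern matches every marginal at once, or by adding a constant number of witness vectors per gadget that accrue a covariance penalty precisely when the biased variables deviate from the pattern produced by the active state.

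Given a suitable enriched gadget, I would complete the NO direction by extracting from any biased distribution $\nu$ a mean-$\mathbf{0}$, $\{\pm 1\}$-valued distribution $\mu^\star$ on the original variables satisfying $\|\cov_{\mu^\star}(\sum_i y_i \vv_i)\| \leq C \cdot \|\cov_\nu(\text{new signed sum})\| / r^2$ for a constant $C$, so that $\|\cov_\nu\| < C_2 r^2$ would force $\|\cov_{\mu^\star}\| < C_1$, contradicting the original hardness. The $r^2 = (1-|p|)^2 q^2$ scaling reflects that the marginal difference across the biased pair is $2r$, so any $\nu$ respecting the marginals must carry random fluctuations of amplitude $\Theta(r)$ per variable, which aggregates into covariance of order $r^2$. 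The hard part is to simultaneously engineer the enriched gadget so that (i) the YES construction still forces an exactly deterministic signed sum, and (ii) the extraction step preserves the original covariance up to a factor of $1/r^2$ without the rounding from the biased signing to $\{\pm 1\}$ introducing noise that swamps the gap---especially in the regime $p \to 1$, where $r \to 0$ and every source of slack becomes significant.
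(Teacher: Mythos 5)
Your proposal is a plan rather than a proof, and you stop precisely at the hard step. You correctly diagnose that a naive ``pair gadget'' (duplicating each $\vv_i$ as $\pm\vv_i$ with marginals $p\pm r$) is defeated by a synchronized coupling that matches the marginals while producing a deterministic signed sum, and you gesture toward ``witness vectors that accrue a covariance penalty when the biased variables deviate from the active pattern.'' But you never construct such a gadget, and you explicitly flag the remaining work as ``the central technical challenge.'' That is exactly the content that the theorem requires, so the argument is incomplete.

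The paper resolves this in a way that is related in spirit but structurally different. It does not reduce from Theorem~\ref{thm:main1} as a black box; it goes back to the $(3,2$-$2)$ Set-Splitting instance and uses its incidence matrix $\AA$ directly (which conveniently satisfies $\AA\one = 4\one$, doing the job of your ``padding so that $\sum_i\vv_i = \bf 0$''). Instead of a per-vector pair gadget, it introduces two global blocks of $m$ auxiliary variables with targets $p+\beta$ and $p-\beta$, and attaches to each block the columns of the projection matrix $\PPi = \II - \tfrac1m\JJ$. This is the ``penalty gadget'' you were looking for: $\norm{\PPi\xx_2}^2 = m(1-\alpha(\xx_2)^2)$ punishes the auxiliary block whenever it is not close to $\pm\one$, which is exactly what rules out the degenerate coupling. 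The rest of the NO case is a two-part argument---either the $\PPi$-penalty is already $\Omega(\gamma\beta^2 m)$, or, by an anti-concentration lemma (Claims~\ref{clm:alpha} and~\ref{clm:alpha1}), the event $\{\alpha(\xx_2)>\delta,\ \alpha(\xx_3)<-\delta\}$ has probability $\Omega(\beta)$, on which $\xx_2+\xx_3 \approx \mathbf{0}$ and the top block reduces to $\AA\xx_1$ with $\Omega(\gamma m)$ mass from unsplit sets---yielding $\mathbb{E}\|\MM\xx\|^2 = \Omega(\beta\gamma m)$ and hence $\|\cov(\MM\xx)\| = \Omega(\gamma\beta^2)$ after dividing by $D=3m$ and $\Omega(\beta)$ from the conditional bound. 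None of this machinery (the $\PPi$ block, the $\alpha$-concentration claims, or the case split) appears in your sketch, so the gap is not a small detail but the entire core of the lower-bound direction.
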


The parameters $p,q$ in Theorem \ref{thm:main2} may depend on $d$ or $n$.
When the parameter $q$ is a constant near $0$, all the entries of $\xx_0$ are near $p$;
Theorem \ref{thm:main2} implies that it is NP-hard to distinguish whether $C(\mathcal{V}, \xx_0) = 0$
or $C(\mathcal{V}, \xx_0) = \Omega((1-\abs{p})^2)$.
When $\abs{p}$ increases, the gap between the two cases in Theorem \ref{thm:main2} decreases.
In particular, when $\abs{p}$ goes to $1$, the gap goes to $0$.

\paragraph*{Proof ideas.}
Our proofs of Theorem \ref{thm:main1} and \ref{thm:main2} build on reductions from the 2-2 \ssplit \ problem, 
for which Guruswami \cite{guruswami04} proved strong NP-hardness results.
Roughly speaking, in the 2-2 \ssplit \ problem, we are given a universe $U$ 
and a family $\calS$ of $4$-subsets of $U$, and our goal is to assign each element in the universe $1$ or $-1$ 
to maximize the number of ``split'' sets in $\calS$ (a split set has half elements assigned $1$ and half $-1$).
The 2-2 \ssplit \ problem is closely related to the problem of signing vectors to minimize their discrepancy.
Reducing from the 2-2 \ssplit \ problem,
Charikar, Newman, and Nikolov \cite{CNN11} proved NP-hardness results for 
minimizing Spencer's discrepancy \cite{spencer85,bansal10}; 
Spielman and Zhang \cite{SZ22} proved NP-hardness results for minimizing 
Weaver's discrepancy \cite{weaver04,MSS15,BCMS19}.
Our proofs are inspired by those from \cite{CNN11} and \cite{SZ22}. 
However, our constructions are different from these two papers due to the different 
notions of discrepancy.
The proof of Theorem \ref{thm:main1} is a direct reduction from 
the 2-2 \ssplit \ problem, together with an inequality between 
matrix operator norm and matrix trace.
The proof of Theorem \ref{thm:main2} is slightly more involved due to the requirement of the nonzero expectation of $\xx$.
Comparing to the proof of Theorem \ref{thm:main1},
we introduce auxiliary input vectors so that we can construct a signing $\xx$ with the required expectation 
and covariance ${\bf 0}$ whenever such a signing exists,
and we employ an orthogonal projection matrix to force the sum of signed auxiliary vectors 
is almost zero with a sufficiently large probability
under which the signed sum of all the input vectors 
behave similarly to those in Theorem \ref{thm:main1} (without auxiliary vectors).

\paragraph*{Organization of the rest of the paper.}
In Section \ref{sec:notation}, we introduce some preliminaries and notations, 
and formally define the 2-2 \ssplit \ problem and its variants, and state 
the known hardness results.
We prove Theorem \ref{thm:main1} in Section \ref{sec:proofs1} and Theorem \ref{thm:main2} in Section \ref{sec:proof2}.

\section{Preliminaries and Notations}
\label{sec:notation}

\subsection{Matrices and Vectors}

Given a vector $\xx \in \mathbb{R}^n$, we let $\xx(i)$ be the $i$th entry of $\xx$.
Given a matrix $\AA \in \mathbb{R}^{m \times n}$, we let $\AA(i,j)$ be the $(i,j)$th entry of $\AA$.
The Euclidean norm of $\xx$ is $\norm{\xx} \defeq \sqrt{\sum_{i=1}^n \xx(i)^2 }$. 
The \emph{operator norm} of $\AA$ is 
\[
\norm{\AA} \defeq \sup_{\xx \in \mathbb{R}^n} \frac{\norm{\AA \xx}}{\norm{\xx}}.  
\]
When $\AA$ is a square matrix, the trace of $\AA$ is the sum of the entries on its main diagonal, denoted by $\tr(\AA)$.
The trace of $\AA$ equals the sum of the eigenvalues of $\AA$.
In addition, we will use $\one_n$ for the all-$1$ vector in $n$ dimensions 
and ${\bf 0}_{n}$ for the all-$0$ vector, 
and use $\JJ_{m \times n}$ for the all-$1$ matrix in $m \times n$ dimensions
and ${\bf 0}_{m \times n}$ for the all-$0$ matrix .
When the context is clear, we drop the subscription for dimensions.
We will use $\II$ for the identity matrix.

\subsection{2-2 Set-Splitting Problem} 
\label{sec:setsplit}

Our proofs of Theorem \ref{thm:main1} and \ref{thm:main2} build on reductions from the 2-2 \ssplit problem.
In the 2-2 \ssplit \  problem, we are given a universe $U = \{1,2,\ldots,n\}$ and a family of sets 
$\calS = \{S_1, \ldots, S_m\}$ in which each $S_j$ consists of $4$ distinct elements from $U$.
Our goal is to find an assignment of the $n$ elements in $U$, denoted by $\zz \in \{\pm 1\}^n$,
to maximize the number of sets in $\calS$ in which the values of its elements sum up to $0$.
We say an assignment $\zz$ \emph{2-2-splits} (or simply, splits) a set $S_j \in \calS$ 
if $\sum_{i \in S_j} \zz(i) = 0$; we say $\zz$ \emph{unsplits} $S_j$ if $\sum_{i \in S_j} \zz(i) \in \{\pm 2, \pm 4\}$.
We say an instance of the 2-2 \ssplit \  problem is \emph{satisfiable} if 
there exists an assignment that splits all the sets in $\mathcal{S}$.
We say an instance is \emph{$\gamma$-unsatisfiable} if any assignment 
must unsplit at least $\gamma$ fraction of the sets in $\calS$.
Given a number $b \ge 1$, a 2-2 \ssplit \ instance is called a $(b,2$-$2)$ \ssplit \ instance if each element in $U$ appears in at most 
$b$ sets in $\calS$.
In a $(b,2$-$2)$ \ssplit \ instance, we have $4m \le bn$.

\begin{theorem}[\cite{guruswami04}]
  For any constant $\eps > 0$, there exists a constant $b$ such that it is NP-hard to 
  distinguish satisfiable $(b,2$-$2)$ \ssplit \  instances from $(1/12 - \eps)$-unsatisfiable instances.  
  \label{thm:ssplit_hard}
\end{theorem}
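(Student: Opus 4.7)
The plan is to reduce from a known NP-hard gap problem, most naturally a gap version of Label Cover obtained via Raz's parallel repetition, or alternatively H\aa stad's $(1-\eps)$ vs.\ $(1/2+\eps)$ hardness for $3$-LIN-2. Each source constraint will be encoded by a family of $4$-element sets whose 2-2 split condition (equivalently, $\sum_{i \in S} \zz(i) = 0$) enforces either a parity check or a local consistency of a long-code encoding of the intended labels.

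For the basic reduction I would associate each source variable $x_i \in \{\pm 1\}$ with a Set-Splitting variable of the same name, and introduce auxiliary ``anchor'' variables whose roles are (a) to supply the extra fourth coordinate of each 4-set, and (b) to be forced to fixed values via many equality gadgets. A source constraint of the form $x_{i_1} x_{i_2} x_{i_3} = b$ (written multiplicatively in $\pm 1$) is then realized by the 4-set $\{x_{i_1},x_{i_2},x_{i_3},y_b\}$, where $y_b$ is an anchor whose sign is pinned to the appropriate value. Completeness is immediate: a satisfying source assignment, extended by the intended anchor values, splits every 4-set. For soundness I would argue that any signing splitting more than $11/12+\eps$ fraction of the sets can be decoded (by majority / influence decoding on the anchor copies) to a source assignment satisfying more than the soundness threshold of the source problem, which is a contradiction.

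For the bounded-occurrence strengthening to $(b,2\text{-}2)$, I would apply the standard expander-based degree reduction: replace each high-occurrence variable by a cluster of copies, add a constant-degree expander on the cluster, and enforce equality along each expander edge by a short gadget of 4-sets that are split iff the two copies agree. The expander's spectral gap ensures that any signing disagreeing with the majority copy on a $\delta$ fraction of copies must violate $\Omega(\delta)$ fraction of the equality gadgets, so gap preservation costs only a constant factor in $b$ and in $\eps$.

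The main obstacle I expect is pinning down the particular constant $1/12$. A generic PCP-to-Set-Splitting reduction of the kind above gives \emph{some} positive constant gap, but driving the soundness gap up to $1/12 - \eps$ requires tuning the long-code test (or the parity gadget and the distribution of anchor values) so that every assignment to the 4 coordinates of a set that is inconsistent with an intended labeling kills at least a $1/12$ fraction of the tests. This is where the Fourier-analytic analysis on the long code becomes delicate: one must check that the ``cheating'' assignments cannot simultaneously split more than $11/12 + \eps$ of the random tests, and verify the constant $1/12$ against the uniform splitting probability $6/16$ of a size-4 set. I would therefore spend most of the effort on the soundness computation and on designing gadgets whose bias matches this exact threshold, rather than on the reduction itself.
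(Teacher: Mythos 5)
The paper does not prove Theorem~\ref{thm:ssplit_hard}; it is quoted as a black-box citation of Guruswami~\cite{guruswami04}, so there is no in-paper argument to compare your sketch against. Evaluated on its own terms, your high-level direction --- a PCP with perfect completeness, long codes, and Fourier analysis for soundness --- is the right family of ideas, but the concrete plan you give has gaps that would prevent it from working.

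The alternative source you float, H\aa stad's $(1-\eps)$ vs.\ $(1/2+\eps)$ hardness for $3$-LIN-$2$, cannot yield the theorem: $3$-LIN-$2$ has no perfect-completeness gap version (a fully consistent system is solvable by Gaussian elimination), while Theorem~\ref{thm:ssplit_hard} demands distinguishing \emph{satisfiable} instances. Only the Label-Cover route is viable. The anchor gadget is also broken on its own terms. With the fourth coordinate $y_b$ pinned, the $2$-$2$ split condition $x_{i_1}+x_{i_2}+x_{i_3}+y_b=0$ does imply $x_{i_1}x_{i_2}x_{i_3}=y_b$, but the converse fails: for instance $x_{i_1}=x_{i_2}=x_{i_3}=y_b=1$ satisfies the parity yet the $4$-set is $4$-$0$, not $2$-$2$. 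So a satisfying source assignment will in general not split every set and your completeness claim collapses. More fundamentally, this per-constraint ``anchor'' shape is not the skeleton of Guruswami's reduction: there the Set-Splitting variables are entries of long-code tables over the label sets, the $4$-sets are the query tuples of a specific long-code test run by the Label-Cover verifier, and the $1/12$ threshold falls out of the Fourier-analytic soundness of that particular test; there is no anchor bias to tune. As written, the soundness step you describe (majority/influence decoding on anchor copies) has nothing to decode.

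Finally, the expander-based degree reduction you outline is the standard tool for pushing the occurrence bound down to a small explicit constant, which is exactly what Theorem~\ref{thm:322sspit_hard} (from~\cite{SZ22}) accomplishes for $b=3$; Theorem~\ref{thm:ssplit_hard} only asserts the existence of \emph{some} constant $b$, which already follows from the bounded query structure of Guruswami's verifier, so this component is solving a different problem than the one at hand.
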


A similar hardness result holds for $b = 3$. We will need it 
for our constructions.

\begin{theorem}[\cite{SZ22}]
  There exists a constant $\gamma > 0$ such that it is NP-hard to distinguish 
  satisfiable instances of the $(3,2$-$2)$ \ssplit \  problem from $\gamma$-unsatisfiable instances.
  \label{thm:322sspit_hard}
\end{theorem}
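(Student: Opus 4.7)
My plan is to give a gap-preserving reduction from the $(b,2$-$2)$ \ssplit \ problem of Theorem \ref{thm:ssplit_hard} (for the constant $b$ it provides) to the $(3,2$-$2)$ \ssplit \ problem by a standard degree reduction. For each element $i \in U$ appearing in $d_i \le b$ of the input sets, I would introduce $d_i$ fresh copies $i_1,\dots,i_{d_i}$ and replace $i$ by the $j$th copy in the $j$th set that contains $i$. After this substitution every copy lies in exactly one original set, so the remaining task is to force the copies of each element to take a common value without adding much degree.

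The equality gadget I would use for ``$z_v = z_w$'' consists of the two $4$-sets $\{u,v,a,b\}$ and $\{u,w,a,b\}$ on three fresh auxiliary variables $u,a,b$. Subtracting the two ``split'' equations yields $z_v - z_w = 0$, so if both sets are $2$-$2$ split then $z_v = z_w$; conversely, when $z_v = z_w$ one can always choose $z_u, z_a, z_b$ to split both sets simultaneously, and when $z_v \ne z_w$ the required value of $z_u + z_a + z_b$ would have to be both $-z_v$ and $-z_w$ at once, which is impossible, forcing at least one of the two sets to be unsplit. Chaining these gadgets along the path $i_1 \equiv i_2 \equiv \dots \equiv i_{d_i}$ gives each interior copy exactly two additional set-memberships and each endpoint copy one; together with the single original set each copy lies in, every copy has degree at most $3$, and each auxiliary has degree $2$. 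The resulting instance is a $(3,2$-$2)$ \ssplit \ instance whose size is linear in that of the starting instance.

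Gap preservation has a routine completeness direction (set every copy to the original value of $i$ and balance the auxiliaries locally inside each gadget) and a two-case soundness direction. Given any assignment $\xx'$ of the new instance, let $B \subseteq U$ be the set of elements whose copies are not all equal under $\xx'$; the gadget analysis gives at least $|B|$ unsplit gadget sets. Define $\xx \in \{\pm 1\}^U$ by taking the common value of the copies for $i \notin B$ and an arbitrary value for $i \in B$. By Theorem \ref{thm:ssplit_hard}, $\xx$ unsplits at least $(1/12 - \eps)m$ original sets, and at most $b|B|$ of those can touch $B$, so at least $(1/12 - \eps)m - b|B|$ of the ``copied'' sets in the new instance are unsplit. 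The total number of unsplit sets in the new instance is therefore at least $\max\bigl(|B|,\,(1/12 - \eps)m - b|B|\bigr) = \Omega(m)$, i.e.\ a constant fraction of the total size.

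I expect the main technical obstacle to be achieving the tight degree bound $3$: the natural ``inequality triangle'' gadget that forces $z_u = -z_v$ via three $4$-sets already gives each participating variable degree $2$, and chaining two such inequality gadgets in series to build an equality gadget pushes the intermediate variable to degree $4$. The two-set gadget above is small enough to fit inside degree-$3$ chains, but one must carefully verify that no assignment to the local auxiliaries $u,a,b$ can rescue both sets when $z_v \ne z_w$. Once that is in hand, the remainder is counting and degree bookkeeping.
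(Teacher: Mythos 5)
The paper states Theorem \ref{thm:322sspit_hard} as an imported result from \cite{SZ22} and gives no proof of its own, so there is no in-paper argument to compare against. Your degree-reduction from Theorem \ref{thm:ssplit_hard} is a correct, self-contained derivation of the statement and is the standard route for such a result, so I would expect \cite{SZ22} to follow essentially the same outline.

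A few points worth confirming in your sketch, all of which check out. The two-set equality gadget $\{u,v,a,b\},\{u,w,a,b\}$ on fresh $u,a,b$ works: the sum $z_u+z_a+z_b$ lies in $\{\pm 1,\pm 3\}$, so it can be made to equal $-z_v$ exactly when $z_v \in \{\pm 1\}$, giving completeness when $z_v = z_w$; and splitting both sets forces $z_u+z_a+z_b$ to equal both $-z_v$ and $-z_w$, impossible when $z_v \ne z_w$, so at least one gadget set is unsplit. Chaining along $i_1,\dots,i_{d_i}$ gives interior copies degree $3$ (one original set plus two gadget sets) and endpoint copies and all auxiliaries degree $2$, so the result is a genuine $(3,2$-$2)$ instance. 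For soundness, distinct elements of $B$ contribute disjoint unsplit gadget sets (each chain has its own gadgets), so the count $|B|$ is valid; the total number of sets in the new instance is at most $m + 2\sum_i(d_i-1) \le 9m$, and $\max\bigl(|B|,\,(1/12-\eps)m - b|B|\bigr) \ge \tfrac{(1/12-\eps)m}{b+1}$, yielding $\gamma = \Omega\bigl(\tfrac{1}{b}\bigr)$, a positive constant once $\eps$ (and hence $b$) is fixed. Your worry about achieving the tight degree-$3$ bound is well placed but resolved by exactly this gadget: the path structure keeps every copy at degree $\le 3$ and every auxiliary at degree $2$, with no need for a heavier inequality-triangle construction.
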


\section{Proof of Theorem \ref{thm:main1}}
\label{sec:proofs1}

In this section, we prove Theorem \ref{thm:main1}.

Given a $(3,2$-$2)$ \ssplit \  instance where $\abs{U} = n$ and $\abs{\calS} = m$,
we will construct a list of $\N$ vectors $\mathcal{V} = \vv_1, \ldots, \vv_{\N} \in \mathbb{R}^\dims$ each of Euclidean norm $1$ such that 
(1) if the given $(3,2$-$2)$ \ssplit \  instance is satisfiable, then $C(\mathcal{V}, {\bf 0}) = 0$,
and (2) if the given $(3,2$-$2)$ \ssplit \  instance is $\gamma$-unsatisfiable, then 
$C(\mathcal{V}, {\bf 0}) > C_1$.

For each element $i \in U$, let $A_i \subset \{1,\ldots,m\}$ consist of the 
indices of the sets that contain $i$. 
For each element $i$ that appears in exactly $1$ set in $\calS$ (that is, $\abs{A_i} = 1$), we create $4$ new sets and $2$ new elements.
For each element $i$ that appears in $2$ sets in $\calS$, we create $5$ new sets and 
$3$ new elements. Let $B_i$ be the set consisting of the indices of the newly created sets for element $i$.
Suppose there are $n_1$ elements in $U$ that appear in exactly $1$ set in $\calS$
and $n_2$ elements that appear in $2$ sets.
We set 
$$\dims = m + 4n_1 + 5n_2 \le m + 5n 
\text{ and } N = n + 2n_1 + 3n_2 \le 4n.$$
Consider each element $i \in U$. 
There are $3$ cases depending on how many sets in $\calS$ containing $i$:
\begin{enumerate}
  \item Element $i$ appears in $3$ sets in $\calS$: We define $\vv_i \in \mathbb{R}^d$ 
  such that $\vv_i(j) = \frac{1}{\sqrt{3}}$ for $j \in A_i$ and $\vv_i(j) = 0$ otherwise.

  \item Element $i$ appears in $1$ set in $\calS$: Suppose $B_i = \{i_1, i_2, i_3, i_4\}$.
  We define $\vv_i \in \mathbb{R}^d$ such that $\vv_i(j) = \frac{1}{\sqrt{3}}$ for $j \in A_i \cup \{i_1, i_2\}$
  and $\vv_i(j) = 0$ otherwise. We define two more vectors: (1) $\uu_{i,1} \in \mathbb{R}^d$ such that 
  $\uu_{i,1}(i_1) = \uu_{i,1}(i_3) = \uu_{i,1}(i_4) = \frac{1}{\sqrt{3}}$ and 
  $\uu_{i,1}(j) = 0$ for all other $j$'s, and (2) $\uu_{i,2} \in \mathbb{R}^d$ such that 
  $\uu_{i,2}(i_2) = -\frac{1}{\sqrt{3}}$ and $\uu_{i,2}(i_3) = \uu_{i,2}(i_4) = \frac{1}{\sqrt{3}}$ and
  $\uu_{i,1}(j) = 0$ for all other $j$'s.

  \item Element $i$ appears in $2$ sets in $\calS$: Suppose $B_i = \{i_1, i_2, i_3, i_4, i_5\}$.
  We define $\vv_i \in \mathbb{R}^d$ such that $\vv_i(j) = \frac{1}{\sqrt{3}}$ for $j \in A_i \cup \{i_1\}$
  and $\vv_i(j) = 0$ otherwise. We define three more vectors (1) 
  $\uu_{i,1} \in \mathbb{R}^d$ such that $\uu_{i,1}(i_1) = \uu_{i,1}(i_2) = \uu_{i,1}(i_3) = \frac{1}{\sqrt{3}}$ and
  $\uu_{i,1}(j) = 0$ for all other $j$'s, (2) $\uu_{i,2} \in \mathbb{R}^d$ such that 
  $\uu_{i,2}(i_2) = \uu_{i,2}(i_4) = \uu_{i,2}(i_5) = \frac{1}{\sqrt{3}}$ and 
  $\uu_{i,2}(j) = 0$ for all other $j$'s, and (3) $\uu_{i,3} \in \mathbb{R}^d$ such that 
  $\uu_{i,3}(i_3) = -\frac{1}{\sqrt{3}},  \uu_{i,3}(i_4) = \uu_{i,3}(i_5) = \frac{1}{\sqrt{3}}$ and 
  $\uu_{i,3}(j) = 0$ for all other $j$'s.
\end{enumerate}
We let $\vv_{n+1}, \ldots, \vv_{N}$ be the vectors $\uu_{i,h}$'s constructed above.
We can check that all $\vv_1, \ldots, \vv_N$ have Euclidean norm $1$.

\begin{claim}
  
  For any $\zz \in \{\pm 1\}^n$, we can construct an $\yy \in \{\pm 1\}^N$
  such that (1) $\yy(i) = \zz(i)$ for $i \in \{1,\ldots,n\}$ and 
  (2) $\sum_{i=1}^N \yy(i) \vv_i (j) = 0$ for all $j \in \{m+1, \ldots, d\}$.
\label{clm:sign_u}  
\end{claim}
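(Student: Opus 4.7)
The plan is to exploit the fact that the auxiliary coordinates $\{m+1,\ldots,d\}$ are partitioned across the elements: the block $B_i$ of new coordinates introduced for element $i$ is disjoint from $B_{i'}$ for $i'\neq i$, and only the vectors $\vv_i$ and its associated $\uu_{i,h}$'s (if any) have nonzero entries in $B_i$. Hence fixing $\yy(i)=\zz(i)$ for $i\in\{1,\ldots,n\}$ decouples the problem of zeroing out the sum on $\{m+1,\ldots,d\}$ into $n$ independent small subproblems, one per element. I would handle each subproblem by writing out the three cases in the construction and choosing signs for the $\uu_{i,h}$'s explicitly.

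For an element $i$ that appears in three sets there is nothing to do, since $B_i=\emptyset$. For an element $i$ that appears in exactly one set with $B_i=\{i_1,i_2,i_3,i_4\}$, write $s=\zz(i)$ and look at the four equations on $B_i$: the coordinate $i_1$ receives $s/\sqrt{3}$ from $\vv_i$ and $\yy(\uu_{i,1})/\sqrt{3}$ from $\uu_{i,1}$, which forces $\yy(\uu_{i,1})=-s$; the coordinate $i_2$ receives $s/\sqrt{3}$ from $\vv_i$ and $-\yy(\uu_{i,2})/\sqrt{3}$ from $\uu_{i,2}$, which forces $\yy(\uu_{i,2})=s$; and the coordinates $i_3,i_4$ both receive $(\yy(\uu_{i,1})+\yy(\uu_{i,2}))/\sqrt{3}$, which vanishes with the choices just made. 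Thus setting $\yy(\uu_{i,1})=-\zz(i)$ and $\yy(\uu_{i,2})=\zz(i)$ satisfies all four equations.

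For an element $i$ that appears in two sets with $B_i=\{i_1,\ldots,i_5\}$, again with $s=\zz(i)$, the five equations read (after multiplying by $\sqrt{3}$): $s+\yy(\uu_{i,1})=0$ at $i_1$, $\yy(\uu_{i,1})+\yy(\uu_{i,2})=0$ at $i_2$, $\yy(\uu_{i,1})-\yy(\uu_{i,3})=0$ at $i_3$, and $\yy(\uu_{i,2})+\yy(\uu_{i,3})=0$ at both $i_4$ and $i_5$. Propagating from the first equation gives $\yy(\uu_{i,1})=-s$, $\yy(\uu_{i,2})=s$, $\yy(\uu_{i,3})=-s$, and the last two equations are then automatically satisfied because $\yy(\uu_{i,2})+\yy(\uu_{i,3})=s-s=0$.

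Combining the per-element choices gives a single $\yy\in\{\pm 1\}^N$ which agrees with $\zz$ on the first $n$ coordinates and makes $\sum_{i=1}^N\yy(i)\vv_i(j)=0$ for every $j\in\{m+1,\ldots,d\}$, as required. The only thing that really needs to be checked is that in each case the linear system on $B_i$ is consistent and solvable in $\{\pm 1\}$; this is the reason the auxiliary vectors $\uu_{i,h}$ were designed with the specific sign patterns in the construction, and the verification is the routine sign-chasing above. There is no real obstacle beyond this bookkeeping.
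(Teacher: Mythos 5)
Your proposal is correct and follows essentially the same approach as the paper: it observes that the auxiliary blocks $B_i$ are disjoint and only $\vv_i$ and its associated $\uu_{i,h}$'s touch $B_i$, then chooses the same signs $\yy(\uu_{i,1})=-\zz(i), \yy(\uu_{i,2})=\zz(i)$ (and $\yy(\uu_{i,3})=-\zz(i)$ in the two-set case) and verifies them coordinate by coordinate. The only difference is presentation --- you derive the signs by propagating forced values through the equations, while the paper states the signs and checks them against the explicit subvectors --- but the content is identical.
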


\begin{proof}
  We only need to determine the signs for the vectors $\uu_{i,h}$'s constructed for elements 
  appearing in less than 3 sets in $\calS$ and to check the coordinates of $\sum_{i=1}^N \yy(i) \vv_i$
  whose indices in $B_i$'s. 
  Let $i \in U$ be an element that appears in $1$ set in $\calS$. 
  The subvectors of $\vv_i, \uu_{i,1}, \uu_{i,2}$ restricted to the coordinates 
  in $B_i$ are:
  \[
  \begin{pmatrix}
    1 \\
    1 \\
    0 \\
    0
  \end{pmatrix},
  \begin{pmatrix}
    1 \\
    0 \\
    1 \\
    1
  \end{pmatrix},
  \text{ and }  
  \begin{pmatrix}
    0 \\
    -1 \\
    1 \\
    1
  \end{pmatrix}.
  \]
  We choose the signs  in $\yy$ for $\uu_{i,1}, \uu_{i,2}$ to be $-\zz(i)$ and $\zz(i)$, respectively,
  which guarantees the signed sum of the $\vv_1, \uu_{i,1}, \uu_{i,2}$ is ${\bf 0}$
  when restricted to $B_i$. Since any other vector has $0$ for the coordinates in $B_i$,
  we have $\sum_{i=1}^N \yy(i) \vv_i(j) = 0$ for $j \in B_i$. 
  Now, let $i \in U$ be an element that appears in $2$ set in $\calS$. 
  The subvectors of $\vv_i, \uu_{i,1}, \uu_{i,2}, \uu_{i,3}$ restricted to the coordinates 
  in $B_i$ are:
  \[
  \begin{pmatrix}
    1 \\
    0 \\
    0 \\
    0 \\
    0
  \end{pmatrix},
  \begin{pmatrix}
    1 \\
    1 \\
    1 \\
    0 \\
    0
  \end{pmatrix},
  \begin{pmatrix}
    0 \\
    1 \\
    0 \\
    1 \\
    1
  \end{pmatrix},
  \text{ and }
  \begin{pmatrix}
    0 \\
    0 \\
    -1 \\
    1 \\
    1
  \end{pmatrix}.
  \]
  We choose the signs in $\yy$ for $\uu_{i,1}, \uu_{i,2}, \uu_{i,3}$ to be 
  $-\zz(i), \zz(i), -\zz(i)$, respectively.
  This guarantees $\sum_{i=1}^N \yy(i) \vv_i(j) = 0$ for $j \in B_i$. 
  Thus, the constructed $\yy$ satisfies the conditions.
\end{proof}

Suppose the given $(3,2$-$2)$ \ssplit \  instance is satisfiable, meaning there 
exists an assignment $\zz \in \{\pm 1\}^{n}$ such that $\sum_{i=1}^n \zz(i) \vv_i = {\bf 0}$.
We construct a vector $\yy \in \{\pm 1\}^{N}$ as in Claim \ref{clm:sign_u}.
Thus, $\sum_{i=1}^N \yy(i) \vv_i = {\bf 0}$.
We define a random vector $\xx \in \{\pm 1\}^{N}$ such that 
$\xx = \yy$ with probability $1/2$ and $\xx = -\yy$ with probability $1/2$.
Then, $\mathbb{E}[\xx] = {\bf 0}$ and $\cov(\mathcal{V}, \xx) = {\bf 0}$, 
that is, $C(\mathcal{V}, {\bf 0}) = 0$.

Suppose the given $(3,2$-$2)$ \ssplit \  instance is $\gamma$-unsatisfiable, 
meaning that for any assignment $\zz \in \{\pm 1\}^{n}$, at least $\gamma$
fraction of the entries of $\sum_{i=1}^n \zz(i) \vv_i$ are in $\{\pm 2, \pm 4\}$.
Then, for any $\yy \in \{\pm 1\}^N$, at least 
$$\frac{\gamma n}{N} \ge \frac{\gamma}{4}$$
fraction of the entries of $\sum_{i=1}^N \yy(i) \vv_i$ are in $\{\pm 2, \pm 4\}$.
Then, for any random $\xx \in \{\pm 1\}^{N}$ with $\mathbb{E}[\xx] = {\bf 0}$, 
let $\ww = \sum_{i=1}^N \xx(i) \vv_i$,
\begin{align*}
  \norm{\cov \left( \ww  \right)}
  & = \norm{\mathbb{E} \left[ \ww \ww^\top
  \right]}
  \ge \frac{1}{d} \tr \left( \mathbb{E} \left[ \ww \ww^\top
  \right] \right)
  = \frac{1}{d} \mathbb{E} \left[
    \tr(\ww \ww^\top)
  \right] \\
  & 
  \ge \frac{1}{d} \cdot 4 \cdot \frac{\gamma N}{4}
  = \frac{\gamma N}{d}
  \ge \frac{4 \gamma}{23}.
\end{align*}
The last inequality holds since $d \le m+5n \le \frac{23n}{4} \le \frac{23N}{4}$.
That is, $C(\mathcal{V}, {\bf 0}) > \frac{4\gamma}{23}$.
If we can distinguish whether $C(\mathcal{V}, {\bf 0}) = 0$ or $C(\mathcal{V}, {\bf 0}) > \frac{4\gamma}{23}$, 
then we can distinguish whether a $(3,2$-$2)$ \ssplit \ instance is  satisfiable
or $\gamma$-unsatisfiable, which is NP-hard by Theorem \ref{thm:322sspit_hard}.
This completes the proof of Theorem \ref{thm:main1}.

\section{Proof of Theorem \ref{thm:main2}}
\label{sec:proof2}

In this section, we prove Theorem \ref{thm:main2}.

Given a $(3,2$-$2)$ \ssplit \  instance where $\abs{U} = n$
and $\abs{\calS} = m$, we will construct a list of vectors.
Let $\AA \in \mathbb{R}^{m \times n}$ be the incidence matrix 
of the $(3,2$-$2)$ \ssplit \ instance, where $\AA(j,i) = 1$ if element $i \in S_j$
and $\AA(j,i) = 0$ otherwise. Since each set in the $(3,2$-$2)$ \ssplit \ instance has 
$4$ distinct elements, each row of $\AA$ has sum $4$.
Then, we define 
\[
  \MM \defeq \begin{pmatrix}
    \AA & -2\II & -2\II \\
    {\bf 0} & \II - \frac{1}{m}\JJ & {\bf 0} \\
    {\bf 0} & {\bf 0} & \II - \frac{1}{m}\JJ
  \end{pmatrix},
\]
Let $\PPi \defeq \II - \frac{1}{m}\JJ$. 
$\PPi$ is an orthogonal projection matrix onto the subspace orthogonal to $\one_m$.
The dimensions of $\MM$ are $3 m \times (n+2m)$.
Let $D = 3m$ and let $N = n+2m$.
We define a list of vectors $\mathcal{V} = \vv_1, \ldots, \vv_N \in \mathbb{R}^D$ to be the columns
of $\MM$.
Note that each $\vv_i$ has $O(1)$ Euclidean norm.
Dividing each $\vv_i$ by the maximum norm among all $\vv_1, \ldots, \vv_N$ yields 
a list of vectors with Euclidean norm at most $1$.
Without loss of generality, we assume $p \ge 0$.
We define 
\[
\xx_0 \defeq \begin{pmatrix}
  p \one_{n} \\
  (p + (1-p)q) \one_m \\
  (p - (1-p)q) \one_m
\end{pmatrix}.
\]
Denote $\beta = (1-p)q$.
For any $\xx \in \mathbb{R}^N$, 
\[
\sum_{i=1}^N \xx(i) \vv_i  = \MM \xx.  
\]

\begin{claim}
  If $\xx \in \mathbb{R}^N$ satisfies $\mathbb{E}[\xx] = \xx_0$, 
  then $\cov(\MM\xx) = \mathbb{E} \left[ \MM\xx \xx^\top \MM^\top \right]$.
  \label{clm:mx0}
\end{claim}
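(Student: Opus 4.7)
The plan is to reduce the claim to the single identity $\MM \xx_0 = {\bf 0}$. From the definition of covariance given at the beginning of the paper, $\cov(\MM\xx) = \mathbb{E}\bigl[\MM(\xx-\xx_0)(\xx-\xx_0)^\top \MM^\top\bigr]$. Expanding the product inside the expectation and using $\mathbb{E}[\xx] = \xx_0$ on the two cross terms collapses it to
\[
\mathbb{E}\bigl[\MM\xx\xx^\top\MM^\top\bigr] \;-\; \MM\xx_0\xx_0^\top\MM^\top,
\]
so the claim is equivalent to $\MM\xx_0 = {\bf 0}$.

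To verify this, I would compute the three block rows of $\MM\xx_0$ separately using the block structure of $\MM$ and $\xx_0$, with $\beta = (1-p)q$. In the top block, the contribution is $p\AA\one_n - 2(p+\beta)\one_m - 2(p-\beta)\one_m = p\AA\one_n - 4p\one_m$; since every row of the incidence matrix $\AA$ has exactly four ones (each set in the $(3,2$-$2)$ \ssplit \ instance has four distinct elements), $\AA\one_n = 4\one_m$, so the top block vanishes. In the middle and bottom blocks, the only nonzero block column of $\MM$ is $\PPi$, and $\PPi\one_m = \one_m - \frac{1}{m}\JJ\one_m = {\bf 0}$, so the scalar multiples $(p+\beta)\PPi\one_m$ and $(p-\beta)\PPi\one_m$ both vanish.

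The reasoning is routine once the identity $\MM\xx_0 = {\bf 0}$ is isolated, and there is no serious obstacle. The only nontrivial input is that $\xx_0$ has been engineered precisely for this cancellation: its first block is the constant vector $p\one_n$ that triggers the row-sum-four property of $\AA$, while the symmetric perturbations $\pm\beta\one_m$ in the last two blocks cancel against each other in the top block and are annihilated by the projection $\PPi$ in the lower blocks.
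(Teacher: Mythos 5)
Your proof is correct and takes essentially the same route as the paper: both reduce the claim to the identity $\MM\xx_0 = {\bf 0}$ and verify it block by block using $\AA\one_n = 4\one_m$ and $\PPi\one_m = {\bf 0}$. The only cosmetic difference is that you explicitly expand $\mathbb{E}[\MM(\xx-\xx_0)(\xx-\xx_0)^\top\MM^\top]$ into $\mathbb{E}[\MM\xx\xx^\top\MM^\top]-\MM\xx_0\xx_0^\top\MM^\top$, whereas the paper simply observes that $\MM\xx_0={\bf 0}$ makes $\MM(\xx-\xx_0)=\MM\xx$.
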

\begin{proof}
  Note that 
  \[
  \cov(\MM \xx) = \mathbb{E} \left[
    \MM (\xx - \xx_0) (\xx - \xx_0)^\top \MM^\top
  \right].  
  \]
  It suffices to show that $\MM \xx_0 = {\bf 0}$.
  \begin{align*}
    \MM \xx_0 = \begin{pmatrix}
      p \AA \one_n  -2 (p + \beta) \one_m - 2(p - \beta) \one_m \\
      (p + \beta) \PPi \one_m \\
      (p - \beta) \PPi \one_m
    \end{pmatrix}.
  \end{align*}
  Since $\AA \one = 4 \one$ and $\PPi \one = {\bf 0}$,
  we have $\MM \xx_0 = {\bf 0}$.
\end{proof}

Suppose the $(3,2$-$2)$ \ssplit \  instance is satisfiable, and let 
$\zz \in \{\pm 1\}^n$ be an assignment that splits all the sets.
We will show $C(\mathcal{V}, \xx_0) = 0$, that is, there exists a random $\xx \in \mathbb{R}^{N}$ such that 
$\mathbb{E}[\xx] = \xx_0$ and $\cov(\MM\xx) = {\bf 0}$.
 We let 
\[
\xx = \left\{ \begin{array}{ll}
  \one,  & \text{ with probability } p \\
  \begin{pmatrix}
    \zz \\
    \one_m \\
    -\one_m
  \end{pmatrix}, & \text{ with probability } \frac{(1-p)(1+q)}{4} \\
  \begin{pmatrix}
    -\zz \\
    \one_m \\
    -\one_m
  \end{pmatrix}, & \text{ with probability } \frac{(1-p)(1+q)}{4} \\
  \begin{pmatrix}
    -\zz \\
    -\one_m \\
    \one_m
  \end{pmatrix}, & \text{ with probability } \frac{(1-p)(1-q)}{4} \\
  \begin{pmatrix}
    \zz \\
    -\one_m \\
    \one_m
  \end{pmatrix}, & \text{ with probability } \frac{(1-p)(1-q)}{4} 
\end{array}
\right.  
\]

\begin{claim}
  $\mathbb{E}[\xx] = \xx_0$.
\end{claim}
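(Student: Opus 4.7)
The plan is to verify the claim by computing $\mathbb{E}[\xx]$ block by block, since $\xx_0$ itself is given in three blocks of sizes $n, m, m$. Once we see that the five probabilities sum to $1$ (a quick check: $p + 2\cdot\frac{(1-p)(1+q)}{4} + 2\cdot\frac{(1-p)(1-q)}{4} = p + \frac{(1-p)}{2}[(1+q)+(1-q)] = p + (1-p) = 1$), the rest is a direct calculation.

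For the first block (the $\zz$-block of length $n$), I would observe that in the four non-constant atoms, the coefficient of $\zz$ takes values $+1, -1, -1, +1$ respectively, weighted by the four probabilities $(1-p)(1\pm q)/4$. The signs pair up so that $\zz$ cancels entirely, leaving $\mathbb{E}[\xx_{1:n}] = p\one_n$, as required. For the second block (length $m$), the $\one$-atom contributes $p\one_m$ and the four non-constant atoms contribute $\one_m$ with sign pattern $+,+,-,-$ weighted by $(1+q),(1+q),(1-q),(1-q)$ times $(1-p)/4$. The net coefficient is $\tfrac{(1-p)}{4}[2(1+q)-2(1-q)] = (1-p)q = \beta$, giving $(p+\beta)\one_m$. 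The third block is symmetric with signs flipped, yielding $(p-\beta)\one_m$. Stacking the three blocks reproduces $\xx_0$ exactly.

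There is no real obstacle here; the only thing to be careful about is matching the probabilities to the correct $\pm\zz,\pm\one_m,\mp\one_m$ assignments so that the $\zz$-contributions in block one cancel (which they do because the two atoms of weight $(1-p)(1+q)/4$ carry opposite signs of $\zz$, and likewise for the two atoms of weight $(1-p)(1-q)/4$). I would present the computation as three short displayed equations, one per block, and conclude $\mathbb{E}[\xx] = \xx_0$.
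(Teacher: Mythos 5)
Your proposal is correct and amounts to the same direct computation as the paper's proof: you organize it block by block, while the paper first pairs the two atoms of weight $(1-p)(1+q)/4$ (whose $\zz$-contributions cancel and whose $\pm\one_m$ blocks agree) and likewise the two atoms of weight $(1-p)(1-q)/4$, then combines. Both routes reduce to the same cancellation and arithmetic, so there is no substantive difference.
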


\begin{proof}
  By our setting of $\xx$:
  \begin{align*}
    \mathbb{E}[\xx] & = p \one + 
    \frac{(1-p)(1+q)}{2} \begin{pmatrix}
      {\bf 0} \\
      \one \\
      - \one
    \end{pmatrix}
    + \frac{(1-p)(1-q)}{2} \begin{pmatrix}
      {\bf 0} \\
      -\one \\
      \one
    \end{pmatrix} 
   = p\one + (1-p)q \begin{pmatrix}
    {\bf 0} \\
    \one \\
    - \one
  \end{pmatrix}  = \xx_0.
  \end{align*}
\end{proof}

\begin{claim}
  $\cov(\MM \xx) = {\bf 0}$.
\end{claim}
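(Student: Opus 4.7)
The plan is to use the previous claim to convert the covariance into a second-moment matrix, then observe that this matrix is positive semidefinite, so it suffices to prove that $\MM\xx = {\bf 0}$ with probability $1$. I will then verify this by checking each of the five cases in the definition of $\xx$.

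By Claim \ref{clm:mx0}, since $\mathbb{E}[\xx] = \xx_0$, we have $\cov(\MM\xx) = \mathbb{E}[\MM\xx\xx^\top\MM^\top] = \mathbb{E}[(\MM\xx)(\MM\xx)^\top]$. Because $(\MM\xx)(\MM\xx)^\top$ is positive semidefinite for every realization of $\xx$, the expectation vanishes if and only if $\MM\xx = {\bf 0}$ almost surely. So the whole task reduces to checking that $\MM\xx = {\bf 0}$ for each of the five atoms in the support of $\xx$.

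For each case I will compute the three blocks of $\MM\xx$ using the identities $\AA\one = 4\one$ (each row of $\AA$ has exactly four ones), $\PPi\one = {\bf 0}$ (by definition of the projection), and $\AA\zz = {\bf 0}$ (since $\zz$ splits every set). For $\xx = \one$, the first block is $\AA\one - 2\one - 2\one = {\bf 0}$ and the other two blocks are $\PPi\one = {\bf 0}$. For the two outcomes of the form $(\pm\zz, \one, -\one)$, the first block becomes $\pm\AA\zz - 2\one + 2\one = {\bf 0}$ and the two lower blocks are $\pm\PPi\one = {\bf 0}$. The remaining two outcomes $(\pm\zz, -\one, \one)$ are analogous, with signs flipped.

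The only step that needs care is bookkeeping of signs in the block expansion, but no real obstacle arises: the construction of $\MM$ and of the support of $\xx$ was set up precisely so that the $\pm 2\II$ off-diagonal blocks cancel the auxiliary coordinates and so that $\PPi$ annihilates $\pm\one$. Conclude that $\MM\xx = {\bf 0}$ in every case, hence $\cov(\MM\xx) = {\bf 0}$.
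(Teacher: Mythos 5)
Your proof is correct and takes essentially the same approach as the paper: reduce to showing $\MM\xx = \mathbf{0}$ for every atom in the support via Claim~\ref{clm:mx0}, then verify each case using $\AA\one = 4\one$, $\PPi\one = \mathbf{0}$, and $\AA\zz = \mathbf{0}$. The extra remark about positive semidefiniteness is harmless but unnecessary here, since you only need the trivial direction ($\MM\xx = \mathbf{0}$ a.s.\ implies the expectation of $\MM\xx\xx^\top\MM^\top$ is $\mathbf{0}$), not the converse.
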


\begin{proof}
By Claim \ref{clm:mx0}, 
\[
  \cov(\MM\xx) = \mathbb{E}\left[
    \MM \xx \xx^\top \MM^\top
  \right].  
\]
We will show that $\MM\xx = {\bf 0}$ always holds.
We check all the vectors in the support of $\xx$.
If $\xx = \one$, then 
\[
\MM\xx = \begin{pmatrix}
  \AA \one - 4 \one \\
  \PPi \one \\
  \PPi \one
\end{pmatrix}  = {\bf 0}.
\]
If $\xx = \begin{pmatrix}
  \pm \zz \\
  \one \\
  -\one
\end{pmatrix}$, then 
\[
  \MM\xx = \begin{pmatrix}
    \pm \AA \zz - 2\one + 2\one \\
    \PPi \one \\
    - \PPi \one
  \end{pmatrix}  = {\bf 0},
\]
where we use the fact $\AA \zz = {\bf 0}$.
Similarly, if $\xx = \begin{pmatrix}
  \pm \zz \\
  -\one \\
  \one
\end{pmatrix}$, then $\MM\xx = {\bf 0}$.
\end{proof}

Suppose the $(3,2$-$2)$ \ssplit \  instance is $\gamma$-unsatisfiable,
that is, for any $\zz \in \{\pm 1\}^n$, at least $\gamma$ fraction of 
the entries of $\AA \zz$ are in $\{\pm 2, \pm 4\}$.
We will show $C(\mathcal{V}, \xx_0) = \Omega(\beta^2)$, that is, for any random $\xx \in \{\pm 1\}^n$ satisfying $\mathbb{E}[\xx] = \xx_0$,
the operator norm of $\cov(\MM\xx)$ is $\Omega(\beta^2)$.
We write $\xx$ as 
\[
\xx = \begin{pmatrix}
  \xx_1 \\
  \xx_2 \\
  \xx_3
\end{pmatrix},
\]
where $\xx_1 \in \{\pm 1\}^n$ and $\xx_2, \xx_3 \in \{\pm 1\}^m$.
Then,
\begin{align*}
  \MM \xx = \begin{pmatrix}
    \AA \xx_1 - 2 (\xx_2 + \xx_3) \\
    \PPi \xx_2 \\
    \PPi \xx_3
  \end{pmatrix}.
\end{align*}
The following claim splits $\norm{\cov(\MM\xx)}$ into three terms.

\begin{claim}
  $\norm{\cov(\MM\xx)}
  \ge \frac{1}{D} \max \left\{ 
    \mathbb{E} \left[\norm{\AA \xx_1 - 2 (\xx_2 + \xx_3)}^2 \right],
    \mathbb{E} \left[ \norm{\PPi \xx_2}^2 \right],
    \mathbb{E} \left[ \norm{\PPi \xx_3}^2 \right]
    \right\}$.
\label{clm:cov}
\end{claim}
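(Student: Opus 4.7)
The plan is to mimic the trace-versus-operator-norm inequality used in the proof of Theorem \ref{thm:main1} and combine it with the block structure of $\MM\xx$. First, since $\mathbb{E}[\xx] = \xx_0$, Claim \ref{clm:mx0} gives $\cov(\MM\xx) = \mathbb{E}[\MM\xx\,\xx^\top \MM^\top]$, which is positive semidefinite. For any $D \times D$ PSD matrix, the operator norm equals the maximum eigenvalue, and the trace equals the sum of the (nonnegative) eigenvalues, so the operator norm is at least $\tr/D$. Applying this to $\cov(\MM\xx)$ and pulling the trace inside the expectation via linearity and the identity $\tr(\vv \vv^\top) = \norm{\vv}^2$ yields
\[
  \norm{\cov(\MM\xx)} \;\ge\; \frac{1}{D}\,\tr\bigl(\mathbb{E}[\MM\xx\,\xx^\top \MM^\top]\bigr) \;=\; \frac{1}{D}\,\mathbb{E}\bigl[\norm{\MM\xx}^2\bigr].
\]

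Next, I would exploit the block structure of $\MM$. Writing out the product gives
\[
  \MM \xx \;=\; \begin{pmatrix} \AA \xx_1 - 2(\xx_2 + \xx_3) \\ \PPi \xx_2 \\ \PPi \xx_3 \end{pmatrix},
\]
so the squared Euclidean norm decomposes as a sum over the three blocks:
\[
  \norm{\MM\xx}^2 \;=\; \norm{\AA\xx_1 - 2(\xx_2+\xx_3)}^2 \;+\; \norm{\PPi\xx_2}^2 \;+\; \norm{\PPi\xx_3}^2.
\]
Taking expectations, each of the three resulting quantities is nonnegative, so their sum is at least their maximum, which produces the desired lower bound.

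There is essentially no technical obstacle here. The only points worth noting are that Claim \ref{clm:mx0} is what lets us replace the centered form of the covariance with the uncentered second-moment expression $\mathbb{E}[\MM\xx\,\xx^\top\MM^\top]$, and that the weaker $1/D$ (rather than, say, $1/m$) factor is taken precisely so the same uniform bound covers all three blocks and can later be combined with the $(3,2\text{-}2)$ \ssplit \ unsatisfiability assumption when bounding the first term in Section \ref{sec:proof2}.
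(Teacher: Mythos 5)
Your proposal is correct and matches the paper's argument essentially step for step: use Claim~\ref{clm:mx0} to identify $\cov(\MM\xx)$ with the PSD matrix $\mathbb{E}[\MM\xx\,\xx^\top\MM^\top]$, lower bound its operator norm by $\tr/D$, pull the trace inside the expectation to get $\frac{1}{D}\mathbb{E}[\norm{\MM\xx}^2]$, decompose by the three row blocks of $\MM$, and bound the sum of nonnegative terms by their maximum. No gaps and no meaningful deviation from the paper's proof.
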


\begin{proof}
By Claim \ref{clm:mx0},
\begin{align*}
  \norm{\cov(\MM\xx)} & = \norm{\mathbb{E} \left[ \MM \xx \xx^\top \MM^\top \right]} \\
  & \ge \frac{1}{D} \tr \left(
    \mathbb{E}  \left[ \MM \xx \xx^\top \MM^\top \right]
  \right) \\ 
  & = \frac{1}{D} \mathbb{E} \left[
    \tr \left(
      \MM \xx \xx^\top \MM^\top
    \right)
  \right] 
  \\
  & = \frac{1}{D} \mathbb{E} \left[
      \norm{\MM \xx }^2
  \right] \\
  & = \frac{1}{D} \left(
    \mathbb{E} \left[ \norm{\AA \xx_1 - 2 (\xx_2 + \xx_3)}^2 \right]
    + \mathbb{E} \left[ \norm{\PPi \xx_2}^2 \right]
    + \mathbb{E} \left[ \norm{\PPi \xx_3}^2 \right]
  \right) \\
  & \ge \frac{1}{D} \max \left\{ 
    \mathbb{E} \left[\norm{\AA \xx_1 - 2 (\xx_2 + \xx_3)}^2 \right],
    \mathbb{E} \left[ \norm{\PPi \xx_2}^2 \right],
    \mathbb{E} \left[ \norm{\PPi \xx_3}^2 \right]
    \right\}.
\end{align*}
\end{proof}

We will show that at least one of the three terms in the rightmost-hand side is sufficiently large.

We first look at the last two terms $\norm{\PPi \xx_2}^2$ and $\norm{\PPi \xx_3}^2$.
Let $\yy \in \{\pm 1\}^m$ be any vector. Then, 
\begin{align*}
  \norm{\PPi \yy}^2
  = \norm{\yy - \frac{\yy^\top \one}{m} \cdot \one}^2.
\end{align*}
Let $\alpha(\yy) = \frac{\yy^\top \one}{m}$. 
Then, $\norm{\PPi \yy}^2 = (1- \alpha(\yy)^2) m$.

\begin{claim}
  If $\mathbb{E}[\alpha(\xx_2)^2]$ or $\mathbb{E}[\alpha(\xx_3)^2]$
  smaller than $1 - \frac{\gamma \beta^2}{24}$, then 
  $\cov(\MM\xx) = \Omega(\gamma \beta^2)$.
  \label{clm:small_alpha}
\end{claim}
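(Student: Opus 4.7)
The plan is to observe that the claim is an immediate consequence of the identity $\norm{\PPi \yy}^2 = (1 - \alpha(\yy)^2) m$ for $\yy \in \{\pm 1\}^m$ stated just above, combined with the projection lower bound in Claim \ref{clm:cov}. By the symmetry between the roles of $\xx_2$ and $\xx_3$ in both $\MM$ and $\xx_0$, I would assume without loss of generality that $\mathbb{E}[\alpha(\xx_2)^2] < 1 - \gamma \beta^2 / 24$ and argue via the $\xx_2$ block; the argument for $\xx_3$ is identical.

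Taking expectations in the identity and using the hypothesis yields
\[
\mathbb{E}\bigl[\norm{\PPi \xx_2}^2\bigr] = \bigl(1 - \mathbb{E}[\alpha(\xx_2)^2]\bigr)\, m > \frac{\gamma \beta^2}{24}\, m.
\]
Plugging this into the projection term on the right-hand side of Claim \ref{clm:cov} and using $D = 3m$, one obtains
\[
\norm{\cov(\MM \xx)} \ge \frac{1}{D}\, \mathbb{E}\bigl[\norm{\PPi \xx_2}^2\bigr] > \frac{\gamma \beta^2}{72} = \Omega(\gamma \beta^2),
\]
which is the desired bound.

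Conceptually, this claim disposes of the ``easy'' regime in which at least one of the auxiliary blocks $\xx_2$ or $\xx_3$ fails to concentrate near $\pm \one$. In that regime the projection $\PPi$ on its own already witnesses nontrivial covariance, without any need to invoke $\gamma$-unsatisfiability of the underlying set-splitting instance. The proof of the claim itself is a one-line substitution and presents no real obstacle. The genuine difficulty in completing Theorem \ref{thm:main2} lies in the complementary regime, where both $\mathbb{E}[\alpha(\xx_2)^2]$ and $\mathbb{E}[\alpha(\xx_3)^2]$ exceed $1 - \gamma \beta^2 / 24$; there one must exploit the near-concentration of $\xx_2, \xx_3$ around $\pm \one$ to reduce the first block $\AA \xx_1 - 2(\xx_2 + \xx_3)$ to the situation of Theorem \ref{thm:main1} and combine this with the unsatisfiability hypothesis to extract the lower bound. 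That reduction, however, is the task of a subsequent claim rather than the present one.
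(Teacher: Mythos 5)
Your proof is correct and takes essentially the same route as the paper: assume WLOG that $\mathbb{E}[\alpha(\xx_2)^2] < 1 - \gamma\beta^2/24$, use the identity $\norm{\PPi\xx_2}^2 = (1-\alpha(\xx_2)^2)m$, and plug into Claim~\ref{clm:cov} with $D = 3m$. You go slightly further than the paper by spelling out the explicit constant $\gamma\beta^2/72$, but the argument is identical.
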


\begin{proof}
  Without loss of generality, we assume $\mathbb{E}[\alpha(\xx_2)^2]
  < 1 - \frac{\gamma \beta^2}{24}$. The argument for $\mathbb{E}[\alpha(\xx_3)^2]$
  is the same.
  By Claim \ref{clm:cov}, 
  \begin{align*}
    \norm{\cov(\MM\xx)} 
    \ge \frac{1}{D} \mathbb{E} \left[ \norm{\PPi \xx_2}^2 \right]
    = \frac{(1-\mathbb{E}[\alpha(\yy)^2]) m}{D} 
    = \Omega(\gamma \beta^2).
  \end{align*}
\end{proof}

In the rest of the proof, we assume that 
both $\mathbb{E}[\alpha(\xx_2)^2]$ and $\mathbb{E}[\alpha(\xx_2)^2]$ 
are at least $1 - \frac{\gamma \beta^2}{24}$.
We will show that under this assumption, with probability $\Omega(\beta)$, a large fraction of the
entries of $\xx_2 + \xx_3$ are $0$, and thus $\norm{\AA \xx_1 - 2(\xx_2 + \xx_3)}^2 \approx \norm{\AA \xx_1}^2$. 
This implies $\mathbb{E} \left[ \norm{\AA \xx_1 - 2(\xx_2 + \xx_3)}^2 \right] = \Omega(\beta m)$.
We will need the following properties about $\alpha(\yy)$ for any vector $\yy \in \{\pm 1\}^m$.
When the context is clear, we use $\alpha$ for $\alpha(\yy)$.

\begin{claim}
  For any $\delta > 0$, 
  $\Pr(\abs{\alpha} \le \delta) \le \frac{1 - \mathbb{E}[\alpha^2]}{1 - \delta^2}$.
  \label{clm:alpha}
\end{claim}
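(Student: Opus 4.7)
The plan is to prove this via a simple second-moment (Paley--Zygmund--style) argument that splits $\mathbb{E}[\alpha^2]$ according to whether $|\alpha| \le \delta$ or not. The key observation is that since $\yy \in \{\pm 1\}^m$, every entry of $\yy$ has absolute value $1$, so $\alpha(\yy) = \frac{\yy^\top \one}{m}$ always lies in $[-1,1]$, and in particular $\alpha^2 \le 1$ pointwise.

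Write $p_0 = \Pr(|\alpha| \le \delta)$. Decomposing the expectation of $\alpha^2$ over the two events $\{|\alpha| \le \delta\}$ and $\{|\alpha| > \delta\}$ gives
\begin{align*}
\mathbb{E}[\alpha^2] &= \mathbb{E}[\alpha^2 \cdot \mathbbm{1}_{|\alpha| \le \delta}] + \mathbb{E}[\alpha^2 \cdot \mathbbm{1}_{|\alpha| > \delta}] \\
&\le \delta^2 \cdot p_0 + 1 \cdot (1 - p_0) \\
&= 1 - (1 - \delta^2) p_0.
\end{align*}
Here the first term is bounded by applying $\alpha^2 \le \delta^2$ on the event $\{|\alpha|\le\delta\}$, and the second term uses $\alpha^2 \le 1$ pointwise. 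Rearranging the inequality $\mathbb{E}[\alpha^2] \le 1 - (1 - \delta^2)p_0$ immediately yields $p_0 \le \frac{1 - \mathbb{E}[\alpha^2]}{1 - \delta^2}$, as desired.

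There is no real obstacle here; the only thing to check is that the bound $\alpha^2 \le 1$ is valid, which follows directly from $\yy \in \{\pm 1\}^m$. The claim can thus be dispatched in a few lines, and it will then be used in the remainder of the argument together with the assumption $\mathbb{E}[\alpha(\xx_2)^2], \mathbb{E}[\alpha(\xx_3)^2] \ge 1 - \frac{\gamma \beta^2}{24}$ to conclude that $|\alpha(\xx_2)|$ and $|\alpha(\xx_3)|$ are close to $1$ with high probability, and hence that a constant fraction of entries of $\xx_2 + \xx_3$ vanish with probability $\Omega(\beta)$.
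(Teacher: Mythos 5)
Your proof is correct and is essentially identical to the paper's: both decompose $\mathbb{E}[\alpha^2]$ over the events $\{|\alpha|\le\delta\}$ and $\{|\alpha|>\delta\}$, bound the two pieces by $\delta^2\,\Pr(|\alpha|\le\delta)$ and $1-\Pr(|\alpha|\le\delta)$ respectively, and rearrange. The only difference is that you explicitly spell out that $\alpha^2\le 1$ because $\yy\in\{\pm1\}^m$, which the paper leaves implicit.
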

\begin{proof}
  Note that 
  \begin{align*}
    \mathbb{E}[\alpha^2] \le \Pr(\abs{\alpha} \le \delta) \cdot \delta^2 + 
    1 - \Pr(\abs{\alpha} \le \delta).
  \end{align*}
  Thus,
  \begin{align*}
    \Pr(\abs{\alpha} \le \delta)
    \le \frac{1 - \mathbb{E}[\alpha^2]}{1 - \delta^2}.
  \end{align*}
\end{proof}

\begin{claim}
  For any $\delta > 0$,
  \begin{align*}
    & \Pr(\alpha > \delta) \ge \frac{\mathbb{E}[\alpha] + \delta(1-2\Pr(\abs{\alpha} \le \delta))}{1+\delta}, \\
    &  \Pr(\alpha < -\delta) \ge \frac{ - \mathbb{E}[\alpha] + \delta(1-2\Pr(\abs{\alpha} \le \delta))}{1+\delta}.
  \end{align*}
  \label{clm:alpha1}
\end{claim}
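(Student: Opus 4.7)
The plan is to write $\mathbb{E}[\alpha]$ as the sum of its contributions over the three disjoint events $\{\alpha>\delta\}$, $\{\alpha<-\delta\}$, and $\{|\alpha|\le\delta\}$, and then use the fact that $\alpha=\yy^\top\one/m\in[-1,1]$ for $\yy\in\{\pm1\}^m$ to bound each contribution.

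Concretely, for the first inequality I would upper bound $\alpha$ by $1$ on $\{\alpha>\delta\}$, by $-\delta$ on $\{\alpha<-\delta\}$, and by $\delta$ on $\{|\alpha|\le\delta\}$. Writing $p_+=\Pr(\alpha>\delta)$, $p_-=\Pr(\alpha<-\delta)$, $p_0=\Pr(|\alpha|\le\delta)$, and using $p_-=1-p_+-p_0$, this gives
\[
\mathbb{E}[\alpha] \;\le\; p_+ \;-\; \delta\, p_- \;+\; \delta\, p_0 \;=\; (1+\delta)\,p_+ + 2\delta\, p_0 - \delta,
\]
which rearranges to the stated lower bound on $\Pr(\alpha>\delta)$. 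The second inequality follows either by symmetry (apply the first to the random variable $-\alpha$, using that $\mathbb{E}[-\alpha]=-\mathbb{E}[\alpha]$ and $\Pr(|-\alpha|\le\delta)=\Pr(|\alpha|\le\delta)$) or by running the analogous argument with a lower bound on $\mathbb{E}[\alpha]$: bound $\alpha$ below by $\delta$ on $\{\alpha>\delta\}$, by $-1$ on $\{\alpha<-\delta\}$, and by $-\delta$ on $\{|\alpha|\le\delta\}$, then solve for $\Pr(\alpha<-\delta)$.

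There is no real obstacle here; the only thing to be careful about is correctly accounting for the middle region $\{|\alpha|\le\delta\}$ (which contributes the $-2\delta p_0$ term and hence the $1-2\Pr(|\alpha|\le\delta)$ factor in the numerator) and the fact that $p_+ + p_- + p_0 = 1$ so one of the three probabilities can be eliminated to isolate $p_+$ (resp.\ $p_-$) on one side. The proof is a two-line bookkeeping exercise once the decomposition is set up.
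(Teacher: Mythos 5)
Your proof is correct and follows essentially the same route as the paper. The paper phrases the argument through the intermediate quantity $\mathbb{E}\left[\alpha\,\mathbbm{1}_{\abs{\alpha}>\delta}\right]$, bounding it below by $\mathbb{E}[\alpha]-\delta\Pr(\abs{\alpha}\le\delta)$ and above by $\Pr(\alpha>\delta)-\delta\Pr(\alpha<-\delta)$, but once you substitute $\Pr(\alpha<-\delta)=1-\Pr(\alpha>\delta)-\Pr(\abs{\alpha}\le\delta)$ this collapses to exactly your one-line inequality $\mathbb{E}[\alpha]\le p_+-\delta p_-+\delta p_0$; the decomposition, the pointwise bounds on $\alpha$ on each event, and the elimination of $p_-$ are all identical. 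Your observation that the second inequality follows by applying the first to $-\alpha$ is a small simplification over the paper, which just reruns the argument symmetrically.
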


Remark. Since $-1 \le \mathbb{E}[\alpha] \le 1$, the above two lower bounds are both smaller than $1$.

\begin{proof}
  We introduce some notations:
  \[
  \pi = \Pr(\abs{\alpha} \le \delta), 
  ~ \pi_+ = \Pr(\alpha > \delta),
  ~ \pi_- = \Pr(\alpha < -\delta).
  \]
  Let $\mathbbm{1}_{\abs{\alpha} > \delta}$ be the indicator 
  such that $\mathbbm{1}_{\abs{\alpha} > \delta} = 1$ if $\abs{\alpha} > \delta$
  and $\mathbbm{1}_{\abs{\alpha} > \delta} = 0$ otherwise.
  Then,
  \begin{align*}
    \mathbb{E}\left[ \alpha \mathbbm{1}_{\abs{\alpha} > \delta} \right]
    = \mathbb{E}[\alpha] - \mathbb{E}\left[ \alpha \mathbbm{1}_{\abs{\alpha} \le \delta} \right]
    \ge \mathbb{E}[\alpha] - \delta \pi.
  \end{align*}
  On the other hand, 
  \begin{align*}
    \mathbb{E}\left[ \alpha \mathbbm{1}_{\abs{\alpha} > \delta} \right] 
    \le \pi_+ - \delta \pi_-
    = \pi_+ - \delta (1- \pi_+ - \pi)
    = (1+\delta) \pi_+ + \delta \pi - \delta.
  \end{align*}
  Combining the above two inequalities:
  \begin{align*}
    \pi_+ & \ge \frac{\mathbb{E}[\alpha] + \delta(1-2\pi)}{1+\delta}.
  \end{align*}
  To lower bound $\pi_-$, we note that 
  \begin{align*}
    \mathbb{E} \left[
      \alpha \mathbbm{1}_{\abs{\alpha} > \delta} \right]
      \le \mathbb{E}[\alpha] + \delta \pi.
  \end{align*}
  On the other hand,
  \begin{align*}
    \mathbb{E} \left[
      \alpha \mathbbm{1}_{\abs{\alpha} > \delta} \right]
      \ge \delta \pi_+ - \pi_-
      = \delta (1-\pi - \pi_-) - \pi_-
      = -(1+\delta) \pi_- - \delta\pi + \delta.
  \end{align*}
  Combining the above two inequalities:
  \begin{align*}
    \pi_- \ge \frac{ - \mathbb{E}[\alpha] + \delta(1-2\pi)}{1+\delta}.
  \end{align*}
\end{proof}

We choose $\delta = 1 - \frac{\gamma \beta}{10}$. Here, we choose this number 
to simplify our calculation; we can choose $\delta$ to be any number such that 
$\gamma - (1-\delta)$ is greater than a positive constant.
By our choice for $\xx_0$,
\begin{align*}
  \mathbb{E}[\alpha(\xx_2)] = p + \beta 
  \text{ and }
  \mathbb{E}[\alpha(\xx_3)] = p - \beta.
\end{align*}
Let $\calE$ be the event that both $\alpha(\xx_2) > \delta
\text{ and } \alpha(\xx_3) < -\delta$ happen.
By union bound,
\begin{align*} \Pr \left(
    \calE
  \right) 
  \ge & \Pr\left( \alpha(\xx_2) > \delta \right)
  + \Pr \left( \alpha(\xx_3) < -\delta
  \right)  - 1  \\
  \ge & \frac{\mathbb{E}[\alpha(\xx_2)] - \mathbb{E}[\alpha(\xx_3)] + 2 \delta - 2 \delta \left(\Pr(\abs{\alpha(\xx_2)} \le \delta) + \Pr(\abs{\alpha(\xx_3)} \le \delta) \right)}{1 + \delta} - 1 
  \tag{by Claim \ref{clm:alpha1}} \\
  \ge & \frac{2\beta + 2 \delta - \frac{2\delta}{1-\delta^2} (2 - \mathbb{E}[\alpha(\xx_2)^2] - \mathbb{E}[\alpha(\xx_3)^2]) }{1 + \delta} - 1  
  \tag{by Claim \ref{clm:alpha}} \\
  \ge & \frac{1}{1+\delta} \left(
    2 \beta + 1 - \frac{\gamma\beta}{10} - 4 \left( 1 - \frac{\gamma\beta}{10} \right) \left(\frac{ \gamma \beta^2 / 24}{1-\left(1 - \gamma \beta / 10\right)^2}\right) - 1
  \right) \tag{by our setting of $\delta$ and assumption on $\mathbb{E}[\alpha(\xx_2)^2], \mathbb{E}[\alpha(\xx_3)^2]$} \\
  \ge & \frac{1}{1+\delta} \left(
    2 \beta  - \frac{\gamma\beta}{10} -  \left( 1 - \frac{\gamma\beta}{10} \right) \beta 
  \right) \\
  \ge & \frac{\beta}{2} \left(1 - \frac{\gamma}{10}\right). 
  \tag{by $1+\delta \le 2$}
\end{align*}
Assume event $\calE$ happens. 
At least 
$$\frac{1+\alpha(\xx_2)}{2} > 1 - \frac{\gamma \beta}{24}$$
fraction of the entries of $\xx_2$ are $1$; 
at least 
\[
\frac{1-\alpha(\xx_3)}{2} > 1 - \frac{\gamma \beta}{24}  
\]
fraction of the entries of $\xx_3$ are $-1$.
Thus, at least $1 - \frac{\gamma \beta}{12}$ fraction of the entries 
of $\xx_2 + \xx_3$ are $0$.
Note that among these $0$-valued entries of $\xx_2 + \xx_3$, at least 
$\gamma(1-\frac{\beta}{12})$ fraction of the entries of $\AA \xx_1$ are in $\{\pm 2, \pm 4\}$.
In this case,
\begin{align*}
  \norm{\MM \xx}^2
  \ge \norm{\AA \xx_1 - 2(\xx_2 + \xx_3)}^2
  \ge 4 \gamma \left(1 - \frac{\beta}{12}\right) m.
\end{align*}
By Claim \ref{clm:cov},
\begin{align*}
  \norm{\cov(\MM\xx)}
  & \ge \frac{1}{D} \mathbb{E} \left[
    \norm{\AA \xx_1 - 2(\xx_2 + \xx_3)}^2
  \right] \\
  & \ge \frac{1}{D} \Pr(\calE) \cdot 4 \gamma \left(1 - \frac{\beta}{12}\right) m \\
  & \ge \frac{1}{5m} 
  \cdot \frac{\beta}{2} \left(
    1 - \frac{\gamma}{10}
  \right) \cdot 4 \gamma \left(1 - \frac{\beta}{12}\right) m \\
  & = \Omega(\beta).
\end{align*}
Together with Claim \ref{clm:small_alpha}, a $\gamma$-unsatisfiable $(3,2$-$2)$ \ssplit \ instance 
leads to $C(\mathcal{V}, \xx_0) = \Omega(\beta^2)$.
If we can distinguish whether $C(\mathcal{V}, \xx_0) = 0$ or $C(\mathcal{V}, \xx_0) = \Omega(\beta^2)$, 
then we can distinguish whether a $(3,2$-$2)$ \ssplit \ instance is  satisfiable
or $\gamma$-unsatisfiable, which is NP-hard by Theorem \ref{thm:322sspit_hard}.
This completes the proof of Theorem \ref{thm:main2}.

\bibliographystyle{alpha}
\bibliography{ref}

\begin{thebibliography}{DGLN19}

\bibitem[Ban98]{banaszczyk98}
Wojciech Banaszczyk.
\newblock Balancing vectors and gaussian measures of n-dimensional convex
  bodies.
\newblock {\em Random Structures \& Algorithms}, 12(4):351--360, 1998.

\bibitem[Ban10]{bansal10}
Nikhil Bansal.
\newblock Constructive algorithms for discrepancy minimization.
\newblock In {\em 2010 IEEE 51st Annual Symposium on Foundations of Computer
  Science}, pages 3--10. IEEE, 2010.

\bibitem[BCMS19]{BCMS19}
Marcin Bownik, Pete Casazza, Adam~W Marcus, and Darrin Speegle.
\newblock Improved bounds in weaver and feichtinger conjectures.
\newblock {\em Journal f{\"u}r die reine und angewandte Mathematik (Crelles
  Journal)}, 2019(749):267--293, 2019.

\bibitem[BDGL18]{BDGL18}
Nikhil Bansal, Daniel Dadush, Shashwat Garg, and Shachar Lovett.
\newblock The gram-schmidt walk: a cure for the banaszczyk blues.
\newblock In {\em Proceedings of the 50th Annual ACM SIGACT Symposium on Theory
  of Computing}, pages 587--597, 2018.

\bibitem[CNN11]{CNN11}
Moses Charikar, Alantha Newman, and Aleksandar Nikolov.
\newblock Tight hardness results for minimizing discrepancy.
\newblock In {\em Proceedings of the twenty-second annual ACM-SIAM symposium on
  Discrete Algorithms}, pages 1607--1614. SIAM, 2011.

\bibitem[DGLN19]{DGLN19}
Daniel Dadush, Shashwat Garg, Shachar Lovett, and Aleksandar Nikolov.
\newblock Towards a constructive version of banaszczyk's vector balancing
  theorem.
\newblock {\em Theory of Computing}, 15(1):1--58, 2019.

\bibitem[Gur04]{guruswami04}
Venkatesan Guruswami.
\newblock Inapproximability results for set splitting and satisfiability
  problems with no mixed clauses.
\newblock {\em Algorithmica}, 38(3):451--469, 2004.

\bibitem[HSSZ19]{HSSZ19}
Christopher Harshaw, Fredrik S{\"a}vje, Daniel Spielman, and Peng Zhang.
\newblock Balancing covariates in randomized experiments with the gram--schmidt
  walk design.
\newblock {\em arXiv preprint arXiv:1911.03071}, 2019.

\bibitem[MSS15]{MSS15}
Adam~W Marcus, Daniel~A Spielman, and Nikhil Srivastava.
\newblock Interlacing families ii: Mixed characteristic polynomials and the
  kadison—singer problem.
\newblock {\em Annals of Mathematics}, pages 327--350, 2015.

\bibitem[Spe85]{spencer85}
Joel Spencer.
\newblock Six standard deviations suffice.
\newblock {\em Transactions of the American mathematical society},
  289(2):679--706, 1985.

\bibitem[SZ22]{SZ22}
Daniel~A Spielman and Peng Zhang.
\newblock Hardness results for weaver's discrepancy problem.
\newblock {\em arXiv preprint arXiv:2205.01482}, 2022.

\bibitem[Wea04]{weaver04}
Nik Weaver.
\newblock The kadison--singer problem in discrepancy theory.
\newblock {\em Discrete mathematics}, 278(1-3):227--239, 2004.

\end{thebibliography}

\end{document}